\documentclass[a4paper,twoside]{article}

\usepackage{epsfig}
\usepackage{subcaption}
\usepackage{calc}
\usepackage{amssymb}
\usepackage{amstext}
\usepackage{amsmath}
\usepackage{amsthm}

\usepackage{multicol}
\usepackage{pslatex}

\usepackage{graphics} 

\usepackage{hyperref}
\usepackage{cleveref}

\usepackage{apalike}
\usepackage{algorithm2e}
\usepackage[bottom]{footmisc}

\usepackage{booktabs}
\usepackage[obeyspaces]{xurl}
\usepackage{comment}
\usepackage[export]{adjustbox}
\usepackage{tabulary}
\usepackage{tabularx}

\newtheorem{prop}{Proposition}

\newtheorem{defi}{Definition}

\usepackage{mathtools}

\newcommand{\Real}{\mathbb{R}}

\newcommand{\cS}{\mathcal{S}}
\newcommand{\cA}{\mathcal{A}}

\newcommand{\cC}{\mathcal{C}}

\newcommand{\cN}{\mathcal{N}}
\newcommand{\cE}{\mathcal{E}}
\newcommand{\cR}{\mathcal{R}}
\newcommand{\cT}{\mathcal{T}}
\newcommand{\cU}{\mathcal{U}}
\newcommand{\cX}{\mathcal{X}}
\newcommand{\cY}{\mathcal{Y}}

\newcommand{\norm}[1]{\lVert #1 \rVert}

\usepackage[dvipsnames]{xcolor}

\usepackage{SCITEPRESS}     % Please add other packages that you may need BEFORE the SCITEPRESS.sty package.

\usepackage[a4paper,margin=1in]{geometry}
\usepackage{fancyhdr}
\begin{document}

\title{C-STEP: Continuous Space-Time Empowerment for 
Physics-informed Safe Reinforcement Learning of Mobile Agents}

\author{\authorname{Guihlerme Daubt\sup{1}\orcidAuthor{0009-0002-3208-5734} and Adrian Redder\sup{2}\orcidAuthor{0000-0001-7391-4688}}
\affiliation{\sup{1}Faculty of Technology, Bielefeld University, Germany}
\affiliation{\sup{2}Department of Electrical Engineering and Information Technology, Paderborn University, Germany}
\email{ gnmdaudt@gmail.com , aredder@mail.upb.de}
}

\keywords{Reinforcement Learning, Physics-Informed Learning, Empowerment, Reward Shaping, Safe Navigation, Mobile Robotic Systems, Mobile Agents}

\abstract{
Safe navigation in complex environments remains a central challenge for reinforcement learning (RL) in robotics. This paper introduces Continuous Space-Time Empowerment for Physics-informed (C-STEP) safe RL, a novel measure of agent-centric safety tailored to deterministic, continuous domains. This measure can be used to design physics-informed intrinsic rewards by augmenting positive navigation reward functions. The reward incorporates the agent’s internal states (e.g., initial velocity) and forward dynamics to differentiate safe from risky behavior. By integrating C-STEP with navigation rewards, we obtain an intrinsic reward function that jointly optimizes task completion and collision avoidance. Numerical results demonstrate fewer collisions, reduced proximity to obstacles, and only marginal increases in travel time. Overall, C-STEP offers an interpretable, physics-informed approach to reward shaping in RL, contributing to safety for agentic mobile robotic systems.
}

\onecolumn \maketitle \normalsize \setcounter{footnote}{0} \vfill
\thispagestyle{fancy}
\section{\uppercase{Introduction}}
\label{sec:introduction}

Agentic mobile robotic systems require safe and robust control in dynamic environments to ensure reliable navigation. Advances in reinforcement learning (RL) have shown promise for developing such highly adaptive policies. Previous works have explored safe RL methods by using control barrier functions \cite{cbf_saferl,redder2025vds}, robustness prioritization \cite{queeney2024optimal}, and risk-avoidance \cite{Chow2015RiskConstrainedRL}. Nevertheless, safety remains a challenge when dealing with continuous space-time dynamics and unknown environmental conditions. Recent surveys have highlighted the potential of concepts from information theory for safe RL \cite{saferl_gu,saferl_learntosafe}.

Considering this, we introduce a novel formulation of empowerment \cite{klyubin2005all} for dynamical systems operating in continuous space and time. Empowerment is a measure of intrinsic agent control capacity in an environment and has traditionally been used in discrete-time stochastic computer networks. Herein, we propose a new definition of empowerment tailored for dynamical systems. By approximating empowerment through trajectory sampling and forward system dynamics, our approach captures the complex interaction between control strategies and safety, yielding an interpretable safety metric. With the empowerment approximation, we then formulate an intrinsic safe RL reward design.

Our core contribution is a physics-informed intrinsic reward design that incorporates internal system states (such as velocity) to assess safety in navigation via the local empowerment approximation of a system. This empowerment approach can be integrated with positive navigation reward functions, serving as a complementary enhancement to promote physics-informed safety rather than replacing conventional reward designs. The rest of this paper is structured as follows. \Cref{sec:background} provides background material. \Cref{sec:related_work} discusses prior research. \Cref{sec:methodology} introduces our methodology, starting with a new definition of empowerment for deterministic systems and culminating in our proposed C-STEP framework. \Cref{sec:numerical} presents numerical experiments highlighting the safety benefits of our approach. Finally, we discuss C-STEP in comparison to other safe RL paradigms in \Cref{sec:discussion} and conclude with remarks on future work in \Cref{sec:concl}.

\section{Background}\label{sec:background}

We now recall background on information theory and RL. For in-depth background, see \cite{cover2006elements} and \cite{Sutton1998}, respectively.

\subsection{Information Theory}\label{sec:infotheory_background}

The \textit{differential entropy} of a continuous random variable $Y$ with probability density function $f(y)$ and support set $\cY \subset \Real^n$ is defined as:
\begin{equation}
    h(Y) \coloneqq -\int_{\cY} f(y) \log \left(f(y)\right) \mathrm{d}y.
\end{equation}
An important property is that differential entropy is maximized by a uniform distribution \cite[Chap. 12]{cover2006elements}, i.e., for any $Y$ with compact support $\cY$:
\begin{equation}
    \label{eq:diff_entropy_max}
    \max\limits_{p(y)} h(Y) = \log(\lambda(\cY)),
\end{equation}
where $\lambda(\cdot)$ is the $n$-dimensional volume (Lebesgue measure).

 The \textit{conditional differential entropy} $h(Y\mid X)$ of random variables $X,Y$ with probability density function $g(x)$ and conditional probability density $h(y\mid x)$is defined as:
\begin{equation}
    \label{eq:cond_entropy}
    h(Y \mid X)\coloneqq-\int_\cX g(x) \int_\cY h(y \mid x) \log (h(y \mid x)) \mathrm{d} y \mathrm{~d}x.
\end{equation}
The \textit{mutual information} is then defined as $I(Y; X) \coloneqq h(Y)-h(Y\mid X)$. Finally, the \textit{channel capacity} from an ``input'' random variable $X$ to an ``output'' variable $Y$ is the maximal possible mutual information:
\begin{equation}
    \cC(X \to Y) \coloneqq \max\limits_{p(x)} I(Y; X).
\end{equation} 
In the following sections, we adapt these core concepts of channel capacity and entropy to formulate a novel, agent-centric safety measure for deterministic robotic systems. Our central observation is that the channel capacity $\cC(X \to Y)$ can quantify safety. A system state is considered safe if the channel capacity from system control inputs to successive system states is sufficiently large. For deterministic systems, this capacity will be related to the logarithmic volume of reachable sets over a finite horizon.

\subsection{Reinforcement Learning}
\label{subsec:RL}
In RL, an agent learns to act in an environment defined by a Markov Decision Process (MDP) with state space $\cS$, action space $\cA$, transition kernel $p$, and reward function $r:\cS \times \cA \to \Real$. At each state $s \in \cS$, the agent picks an action $a\in \cA$, receives a reward $r(s,a)$, and transitions to a successor state $s'$ with distribution $p(\cdot\mid s,a)$. The objective is to find a policy $\pi: \mathcal{S} \rightarrow \mathcal{A}$ that maximizes the discounted return $R_0 \coloneqq \sum_{k=0}^{\infty} \gamma^{k}r(s_k,a_k)$. In model-free Deep RL, a policy $\pi_{\theta}$ is parameterized by a deep neural network and improved iteratively using interaction data $(s,a,r,s')$ to approximate a policy gradient.

\section{Related Work}\label{sec:related_work}

Several studies have incorporated information-theoretic measures to promote exploration and skill acquisition in RL. In \cite{intrinsically_empowerment_paper}, a variational lower bound of empowerment is used as an intrinsic reward for self-motivated learning. Building on this, \cite{hierarchical_emp_paper} introduced a hierarchical empowerment framework for skill learning. Other works have applied empowerment to enhance self-optimizing networks \cite{empowerment_networks} and to incentivize exploration in sparse-reward scenarios \cite{empowerment_drive_paper,environment_intrinsic_paper}. These methods typically approximate empowerment for discrete-time agents, often at high computational cost. Crucially, our work appears to be the first to leverage readily available forward dynamics of continuous-time mobile systems to approximate empowerment.

Another research direction integrates physics-based insights into reward design. For instance, \cite{physics_intrinsic_reward} used principles like energy conservation as intrinsic rewards. Other methods use reward functions incorporating model dynamics to facilitate diverse motion learning in complex agents \cite{physics_control_reward_function} or evaluate performance based on velocity relative to system dynamics and a target \cite{LyapunovDRL}. In contrast, our method evaluates safety by leveraging the agent's internal states and sampled forward dynamics to approximate reachable sets. We then introduce a multiplicative empowerment term applicable to any positive reward function. This method can be applied on top of reward designs and in conjunction with various safe RL methods \cite{gu2024review}. This has not been explored in prior work, which used additive penalties to promote safety.

\section{C-STEP: Continuous Space-Time Empowerment}
\label{sec:methodology}

This section presents our core contribution, the Continuous Space-Time Empowerment for Physics-informed (C-STEP) safe RL framework. We begin by illustrating why the classical definition of empowerment is ill-suited for deterministic systems and propose a re-formulation. We then develop an intrinsic, physics-informed reward for safe navigation and provide a practical sampling-based method for its approximation.

\subsection{Rethinking Empowerment for Deterministic Systems}
\label{subsec:rethinking_empowerment}
Empowerment, first introduced in \cite{klyubin2005all}, measures the control an agent has over its future sensor states. For a discrete-time perception-action loop (PAL), where at every time $k$ an agent picks an action $U_k$ based on sensor state $X_k$, leading to a subsequent state $X_{k+1}$, empowerment is defined as the conditional channel capacity from actions to subsequent states.
\begin{defi}[State-Dependent Empowerment \cite{salge2014empowerment}]{}
    \label{def:emp_discrete_time}
    For a discrete-time PAL, state-dependent empowerment is:
    \begin{equation}
        \cE(x) \coloneqq \cC\left(U_k \to X_{k+1} \mid X_k = X\right).
\end{equation}
\end{defi}
A problem with \Cref{def:emp_discrete_time} arises when state transitions are deterministic in continuous spaces. In this setting, the relationship between an action and the next state is perfectly predictable, causing the conditional entropy $h(X_{k+1} \mid U_k)$ to become $-\infty$. Consequently, empowerment becomes infinite for any state, as a noiseless channel can encode unbounded information \cite[Section 9.3]{cover2006elements}. Therefore, empowerment as stated in \Cref{def:emp_discrete_time} cannot measure the control or influence a deterministic agent has locally over its environment.

To resolve this, we observed that adding small, state-independent noise to a deterministic system makes maximizing channel capacity equivalent to maximizing the differential entropy of the subsequent state distribution. This observation, for which we give further details in the appendix, motivates a revised definition of empowerment for deterministic agents.

\begin{defi}[Deterministic Agent Empowerment]{}
    \label{def:emp_discrete_time2}
    For a deterministic PAL, empowerment is defined as:
    \begin{equation}
        \cE(s) \coloneqq \max\limits_{p(u)} h(X_{k+1} \mid X_k = X).
    \end{equation}
\end{defi}
From Eq.\eqref{eq:diff_entropy_max} it follows that the maximum empowerment for deterministic agents in Definition \eqref{def:emp_discrete_time2} is achieved when actions induce a uniform distribution over the set of reachable states. Empowerment thus becomes the logarithm of the volume of this reachable set.

\subsection{From Empowerment to Safe RL Rewards}
\label{subsec:empowerment_to_reward}

We formalize our framework for a non-autonomous continuous-time dynamical system:
\begin{equation}
\label{eq:ODE}
    \dot{x}(t) = f(x(t),u(t)),
\end{equation}
with states $x \in \Real^n$, controls $u \in \cU \subset \Real^m$, and a collision-free configuration space $\cX_{\text{free}} \subset \Real^n$, such that any $x \in \cX_{\text{free}}$ is a feasible state. Then, for any initial condition of the ODE in $\cX_{\text{free}}$, we define CST-Empowerment as the deterministic agent empowerment where the action space is the space of all admissible continuous finite-horizon input trajectories. Key definitions and notation for this section are summarized in \Cref{def:key_defs}.
\begin{table}[!t]
\centering
\captionsetup{font=footnotesize}
\captionof{table}{Key Definitions and Notation}
\begin{tabularx}{0.95\columnwidth}{lX}
\toprule
\textbf{Symbol} & \textbf{Definition} \\
\midrule
$\cX_{\text{free}}$ & The collision-free configuration space. \\
$\cR_T(x)$ & Set of reachable states in $\Real^n$ from state $x$ within time $T$ without state constraints. \\
$\cR_{T,\text{free}}(x)$ & Set of reachable states where the state trajectory remains in $\cX_{\text{free}}$. \\
$\cT_T(x)$ & The terminal set, $\cR_T(x) \setminus \cR_{T,\text{free}}(x)$. \\
%$S(\cdot)$ & A sensor function: $1$ if a trajectory is collision-free, $0$ otherwise. \\
$c$ & A positive safety coefficient scaling the empowerment reward. \\
\bottomrule
\end{tabularx}
\label{def:key_defs}
\end{table}

\begin{defi}[CST-Empowerment]
\label{def:T-Emp}
For any $T>0$, the $T$-Horizon CST-Empowerment of \eqref{eq:ODE} is defined as
    \begin{equation}
        \cE_T(x) \coloneqq \max\limits_{p(u(\cdot) )} h(X(T)) \mid x(0)=x_0 ), \quad \forall x_0 \in \cX_{\text{free}},
        \nonumber
    \end{equation}
    where $X(T)$ is the random variable that describes the final state of a $T$-horizon solution trajectory $x(t)$ that starts from some $x_0 \in \cX_{\text{free}}$ and remains in $\cX_{\text{free}}$ given input trajectories from some distribution $p(u(\cdot))$.
\end{defi}
Given that differential entropy is maximized for uniform distributions, we can now conclude that the CST-Empowerment of any state of a first-order non-autonomous ODE equals the log-volume of the $T$-reachable set of trajectories starting from that state and remaining in $\cX_{\text{free}}$. 
\begin{prop}
    \label{prop:1}
    Suppose that $f$ is Lipschitz continuous, then the CST-Empowerment of \eqref{eq:ODE} is given by
    \begin{align}
    \label{eq:emp_sys1}
        \cE_T(x) &= \log\left(\lambda(\cR_{T,\text{free}}(x))\right),  \\
        &=\log\left(\lambda(\cR_T(x))-\lambda(\cT_T(x))\right),
        \nonumber
    \end{align}
    for all $x\in \Real^n$ with $\lambda(\cdot)$ the $n$-dim. volume on $\Real^n$
    %and the reachable set
    %\begin{equation}
    %    \cR_T\left(x, \mathcal{U}\right)\coloneqq\left\{x' \in \Real^n \mid \exists u(\cdot) : x(0)=x, x(T)=x'\right\},
    %    \nonumber
    %\end{equation}
    %where $u(\cdot)$ is an admissible continuous $T$-Horizon control trajectory. 
\end{prop}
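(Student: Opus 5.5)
The plan is to reduce the statement to the entropy maximization property \eqref{eq:diff_entropy_max}. This needs two things. First, the random variable $X(T)$ can only take values in $\cR_{T,\text{free}}(x)$. Second, every density supported on that set is actually realized by some distribution $p(u(\cdot))$ over admissible input trajectories. The second identity in \eqref{eq:emp_sys1} is then bookkeeping. Since $\cR_{T,\text{free}}(x)\subseteq \cR_T(x)$ and $\cT_T(x)=\cR_T(x)\setminus\cR_{T,\text{free}}(x)$, additivity of $\lambda$ gives $\lambda(\cR_{T,\text{free}}(x))=\lambda(\cR_T(x))-\lambda(\cT_T(x))$. This holds provided both sets are measurable and $\lambda(\cR_T(x))<\infty$.

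The first step is well-posedness. Lipschitz continuity of $f$ and Picard--Lindelöf give, for each admissible continuous input $u(\cdot)$ on $[0,T]$, a unique solution $x(\cdot\,;x_0,u)$ on $[0,T]$. This yields a deterministic endpoint map $\Phi_T:u(\cdot)\mapsto x(T;x_0,u)$. The Lipschitz bound also gives the Grönwall estimate $\norm{x(t)-x_0}\le (\norm{f(x_0,\cdot)}_\infty T + L\,T\,\sup_{\cU}\norm{u})e^{LT}$. Assuming $\cU$ is bounded, this shows $\cR_T(x)$ is bounded, so $\lambda(\cR_T(x))<\infty$. Continuity of $\Phi_T$ in $u$, again by Grönwall, makes $\cR_{T,\text{free}}(x)$ the continuous image of a Polish space of trajectories. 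It is therefore analytic and hence Lebesgue measurable, so all the volumes are well defined.

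The second step is the upper bound. Take any $p(u(\cdot))$ supported on trajectories that remain in $\cX_{\text{free}}$. The pushforward law of $X(T)=\Phi_T(u)$ is supported in $\cR_{T,\text{free}}(x)$. If this law has no density, $h=-\infty$ by convention. Otherwise \eqref{eq:diff_entropy_max} gives $h(X(T))\le \log\lambda(\cR_{T,\text{free}}(x))$.

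The third step is achievability, which I expect to be the main obstacle. I would use a measurable selection theorem (Jankov--von Neumann) to choose, for each $y\in\cR_{T,\text{free}}(x)$, an input $\sigma(y)$ with $\Phi_T(\sigma(y))=y$ that keeps the trajectory in $\cX_{\text{free}}$. Let $p(u(\cdot))$ be the pushforward of the uniform distribution on $\cR_{T,\text{free}}(x)$ under $\sigma$. Then $X(T)$ is exactly uniform, and the maximum is attained. Two delicate points remain:
\begin{itemize}
\item If $\lambda(\cR_{T,\text{free}}(x))=0$, both sides equal $-\infty$. Including this case requires the convention $\log 0=-\infty$.
\item The statement is written for all $x\in\Real^n$, but the definition only covers $x\in\cX_{\text{free}}$. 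I would restrict to $x\in\cX_{\text{free}}$, or note that for $x\notin\cX_{\text{free}}$ the free reachable set is empty and both sides are $-\infty$.
\end{itemize}
If the measurable-selection route fails, the fallback is to show the supremum is approached. I would approximate the uniform law by laws on finitely many inputs, each perturbed by small input noise so that a density exists. I would then argue the supremum is attained in the limit, using continuity of entropy on a bounded support.
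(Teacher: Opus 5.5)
Your proposal is correct and follows the same route as the paper's proof. Picard--Lindel\"of makes $X(T)$ a well-defined function of the input trajectory, its law is supported in $\cR_{T,\text{free}}(x)$ (so the entropy is at most $\log\lambda(\cR_{T,\text{free}}(x))$), and an input distribution making $X(T)$ uniform attains this value. The paper only asserts that such an input distribution exists, and your additions fill exactly what its sketch leaves out: the Jankov--von Neumann selection constructs it, and the boundedness of $\cU$ together with your measurability and finiteness checks make the subtraction in the second identity valid, under the mild implicit assumption that, e.g., $\cX_{\text{free}}$ is open so the set of collision-free inputs is Polish.
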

\begin{proof}
    See Appendix.
\end{proof}

\begin{algorithm}[!t]
 \caption{C-STEP: Sampling-based CST-Empowerment approximation algorithm}
 \label{algo:emp_approx}
 \textbf{Input:} Current state $x$, time horizon $T$, number of samples $N$\;
 Sample $N$ admissible control trajectories $u(\cdot): [0,T] \to \cU$\;
 Calculate solution trajectories $x(\cdot)$ for each $u(\cdot)$ with $x(0)=x$.\;
 Initialize empty sets $\cR \leftarrow \emptyset$, $\cT \leftarrow \emptyset$\;
 \For{each solution trajectory $x(\cdot)$}{
  \eIf{trajectory is collision-free %(i.e., $S(x(\cdot) - x) = 1$}{
   }{Add final position $x_{pos}(T)$ to $\cR$\;
  }{Add final position $x_{pos}(T)$ to $\cT$\;}
 }
 Compute estimate $\hat{\cR}_T(x)$ of $\cR_T(x)$ from $\cR$\;
 Compute estimate $\hat{\cT}_T(x)$ of $\cT_T(x)$ from $\cT$\;
 \textbf{Return} C-STEP Approximation \\ $\hat{\cE}_T(x) = \log\left(\lambda(\hat{\cR}_T(x)) - \lambda(\hat{\cT}_T(x))\right)$\;
\end{algorithm}      

Based on \Cref{prop:1}, we propose Algorithm \ref{algo:emp_approx} to obtain practical, efficient approximations of CST-Empowerment using sampled trajectories. Further details on the sampling-based approximation are discussed in \Cref{subsec:sampling}.

We now leverage the availability of CST-Empowerment approximations based on Algorithm \ref{algo:emp_approx} to create an intrinsic reward for safe control of continuous time ODEs \eqref{eq:ODE}. By multiplying a task-specific reward with our safety measure, i.e., CST-Empowerment, we encourage joint optimization of task completion and safety. To give a user a tool to tune the degree of safety, we introduce a safety coefficient $c>0$ that scales the empowerment. 

\begin{defi}[Empowered Navigation Reward Function]{}
\label{def:emp_reward}
Given a task reward $r_d:\Real^n \to [0,1]$, the empowered reward is defined as:
    \begin{equation}
    \label{eq:empowered_reward}
        r(x) \coloneqq r_d(x)\log\left(c \lambda(\cR_{T,\text{free}}(x))\right).
    \end{equation}
    with safety coefficient $c>0$.
\end{defi}
Our empowered reward design, \Cref{def:emp_reward}, has several key properties that we summarize next.

\textbf{Technical properties:} CST-Empowerment weights the navigation reward by the $\log$ of the reachable volume. Thus, empowered rewards characterize a state's safety by the reachable volume from that state. Decreasing the coefficient $c > 0$ increases the safety of a learned policy; see below for a more specific connection between $c$ and safety.

\textbf{Physics-informed Intrinsic Design:} CST-Empowerment incorporates the forward dynamics of a system into a measure of how much space the system can reach from a starting state. It is intrinsic as it is specific to the given system dynamics.

\textbf{Dependence on Internal States:} Most reward designs for navigation incorporate external quantities—such as position or velocity—to penalize unsafe conditions. However, these quantities alone may be misleading as internal states may be close to becoming unstable. By contrast, our empowered reward design captures the influence of all internal states by leveraging the full initial state and forward dynamics in the reachability calculation.

\textbf{Environment Independence:} To approximate CST-Empowerment, it is only necessary to have a sensor function or an estimation to determine/approximate whether a forward trajectory of a system leads to termination. No global map or prior information about obstacles is required.

\textbf{Interpretability:} If an RL agent is trained solely to maximize CST-Empowerment $r(x) = \cE_T(x)$, its optimal behavior is to navigate to states from which it can reach the largest possible volume, promoting inherently safe and flexible positioning. More generally, suppose an agent is trained to maximize \eqref{eq:empowered_reward} and empirically a positive reward is observed in the mean after training. Then one can conclude that on average $\lambda(\cR_{T,\text{free}}(x)) > c$, which can be translated to safety margins for the dynamical system.

\subsection{Sampling-based Approximation and Hyperparameters}
\label{subsec:sampling}
Calculating the exact volume of $\cR_{T,\text{free}}(x)$ is often intractable. We therefore propose a practical sampling-based approximation: First, we compute the convex hull of reachable endpoints to approximate $\hat{\cR}_T(x)$, second, we compute the union of convex hulls of clustered terminal points to approximate $\hat{\cT}_T(x)$.
\Cref{fig:convex_subtract} illustrates how this approximation effectively captures the impact of initial velocity on safety.

Our sampling-based approximation is agnostic to the method used to sample control trajectories. While it is possible to characterize how the approximation error decreases as the number of sampled trajectories increases, this analysis lies outside the scope of the present work. Although sampling many trajectories can be appealing in principle, it may become computationally prohibitive for complex non-linear systems. Consequently, the choice of how many trajectories to sample must be tuned according to the problem and available resources.

A more critical hyperparameter than the number of trajectories is the time horizon $T$ for each trajectory. A large $T$ risks all trajectories colliding (zero empowerment), while a short $T$ offers little information. We propose a heuristic that has worked well for our numerical experiments: set $T$ to the system's worst-case stopping time from maximum velocity. This provides a reasoned baseline that can be tuned for more conservative or aggressive behavior.

\section{Numerical Experiments}
\label{sec:numerical}
We present 2D and 3D navigation tasks where agents must reach targets while avoiding collisions. We compare agents trained with a standard navigation reward ("unempowered") against one trained with our empowered reward function ("empowered"). For every experiment at every time step, we use the standard navigation reward  $\exp{\left(-\norm{\text{agent position} - \text{target position}}_2\right)}$.

Note that the unempowered agents still receive safety feedback because episodes terminate upon collision; thus, value function estimates will not be bootstrapped. It is important to note that our goal in this preliminary study is not to compare with other safe RL methods, but to understand the general behaviour of empowered agents in basic navigation tasks.
For all experiments, we use the Proximal Policy Optimization (PPO) algorithm \cite{ppo_paper} from Stable Baselines3 \cite{stable_baselines3}. The code is available on GitHub\footnote{\url{https://github.com/gdaudt/empowerment-rl}.}.

\subsection{Point Maze Environment}
Our first experiment considers a Point Maze environment where a 2-DoF ball must be navigated to a target by applying a 2D force. The map has two paths: one narrow and one wide. The goal is placed so that the narrow path is the shortest route. The behaviour of the agents after training is shown in \Cref{fig:pointmaze}. The unempowered agent learns to take the faster, riskier path, resulting in more collisions. In contrast, the empowered agent learns to take the longer but safer route, thereby demonstrating a behavioral change toward prioritizing safety.

\begin{figure}[htb]
  \centering
  \begin{subfigure}[b]{0.4\columnwidth} 
    \centering
    \includegraphics[width=\textwidth, height=5cm]{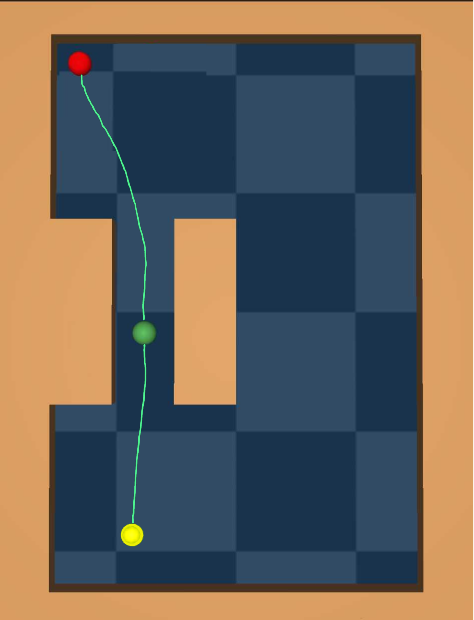} 
    \subcaption{Behavior of the unempowered agent.}
    \label{fig:pointmaze-noemp}
  \end{subfigure}
  \hspace{2mm} 
  \begin{subfigure}[b]{0.4\columnwidth} 
    \centering
    \includegraphics[width=\textwidth, height=5cm]{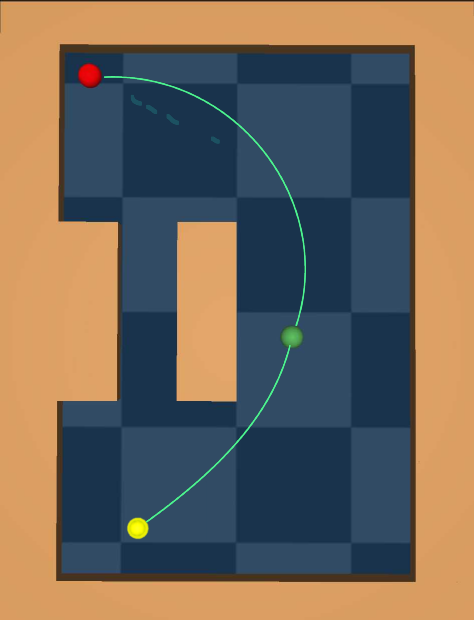} 
    \subcaption{Behavior of the empowered agent.}
    \label{fig:pointmaze-emp}
  \end{subfigure}
  \caption{Point maze navigation. Green spheres are agents, red is the goal, and yellow is the start. The empowered agent prefers the safer, wider path.}
  \label{fig:pointmaze}
\end{figure}

\begin{figure}[htb]
  \centering
  \begin{subfigure}[c]{0.98\columnwidth}
    \centering
    \includegraphics[height=0.15\textheight, valign=t, center]{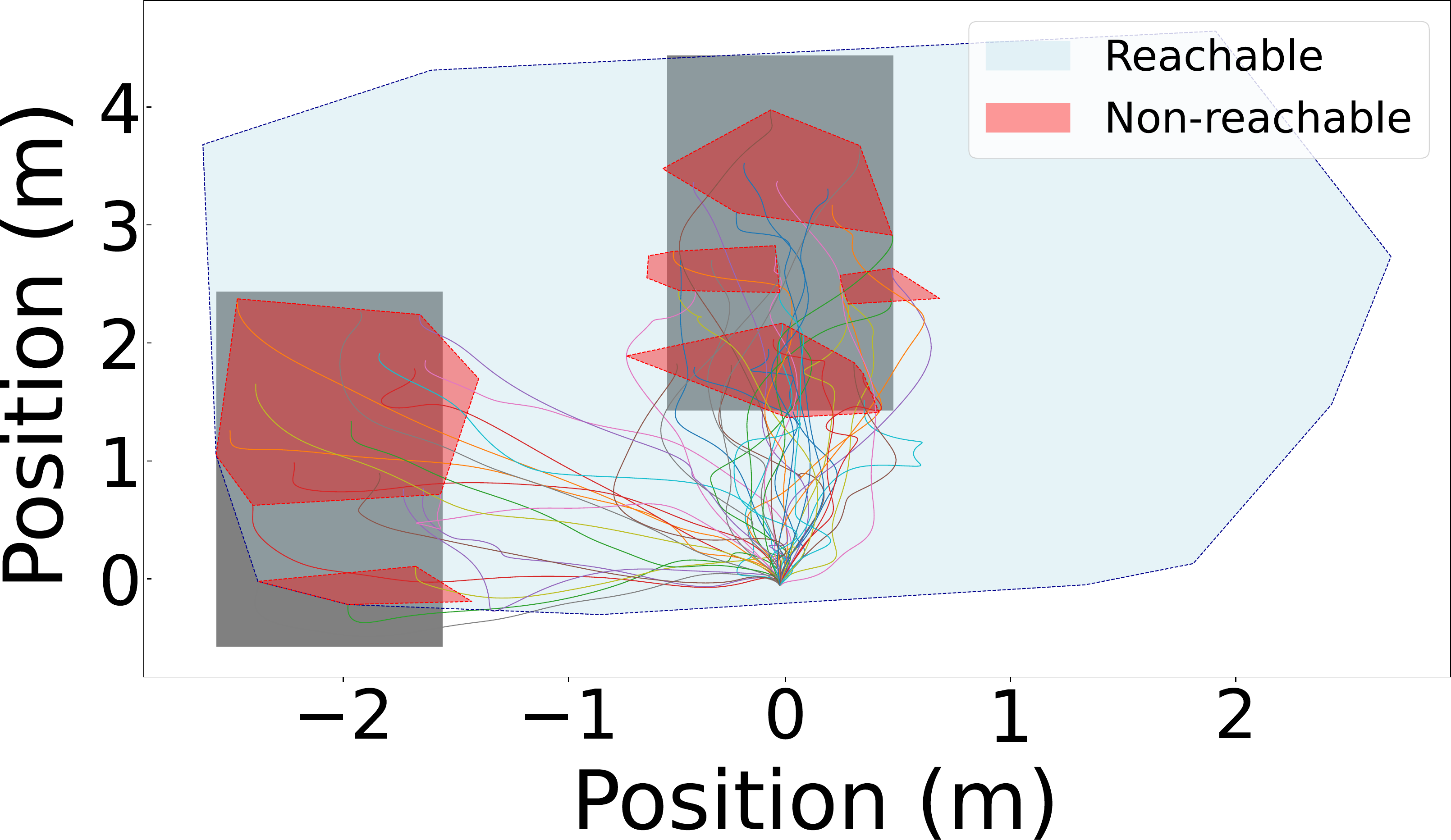}
    \captionsetup{width=1\linewidth}\caption{Initial velocity (m/s) $(v_x \ v_y \ v_z) = (0 \ 2 \ 0)$.}
    \label{fig:convex-y2}
  \end{subfigure}%
  \\
  \vspace{0.3cm}
  \begin{subfigure}[c]{0.98\columnwidth}
    \centering
    \includegraphics[height=0.15\textheight, valign=t, center]{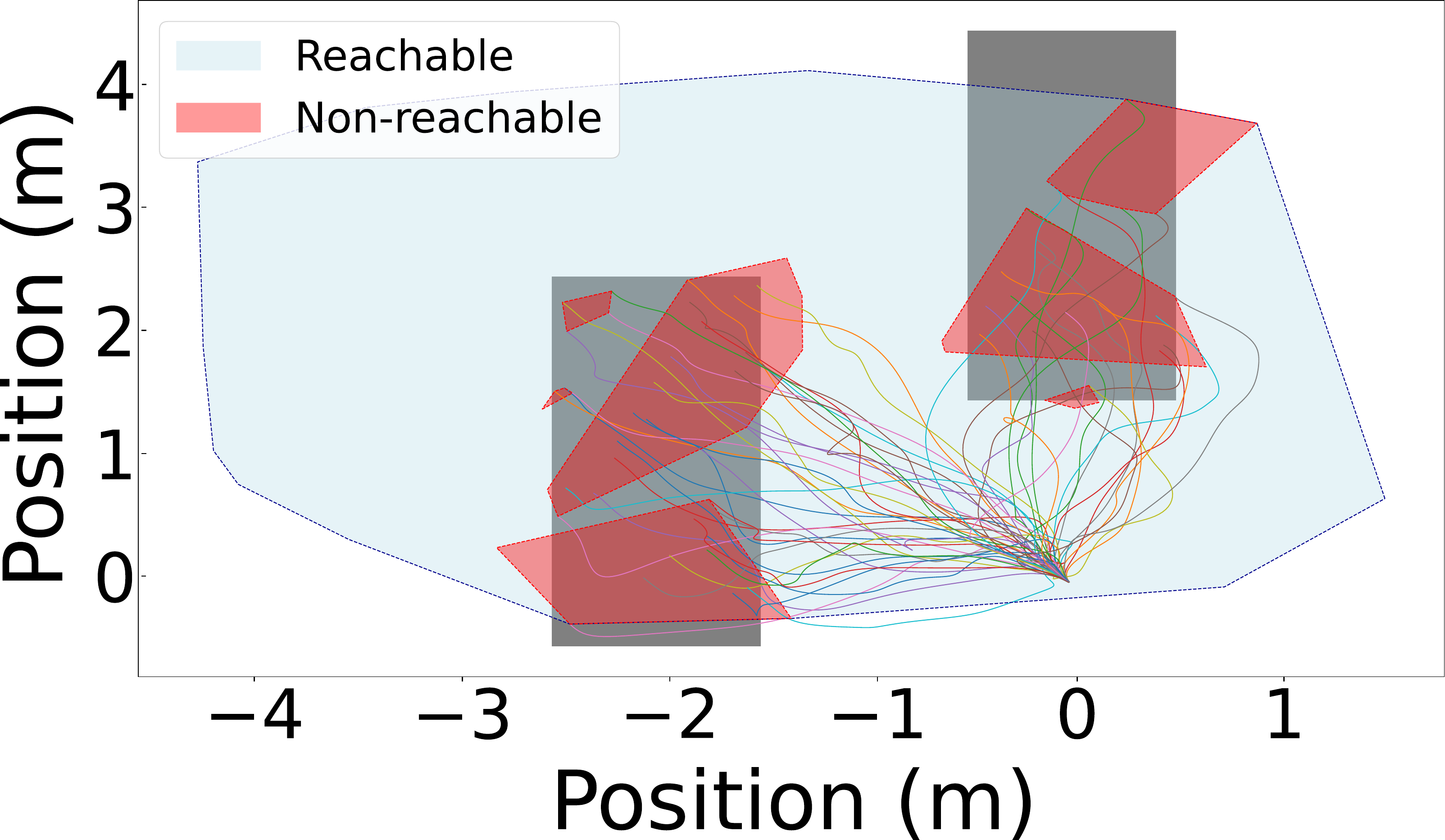}
    \captionsetup{width=1\linewidth}\caption{Initial velocity (m/s) $(v_x \ v_y \ v_z) = (1.5 \ 2 \ 0)$.}
    \label{fig:convex-x1.5y2}
  \end{subfigure}
  \caption{Visualization of the sampling-based reachable set approximation in 2D for different initial velocities. The blue areas indicate total reachable areas; red areas represent the approximated terminal set volume.}
  \label{fig:convex_subtract}
\end{figure}

\subsection{Drone PyBullet Environment}
Our second simulation considers a 3D drone in a dynamic environment (\Cref{fig:pybullet-map}). Both the empowered and the unempowered agent use relative distance to the goal, drone velocity, and 2D LiDAR data as state inputs. The map's path widths are randomized each episode, forcing reliance on sensor data. The training progress (\Cref{fig:avg-reward}) shows the empowered agent successfully learns, indicating our reward provides a meaningful learning signal.

We evaluated a unempowered agent and empowered agents with safety coefficients $c = \{0.5, 1, 10\}$ on 100 different map configurations. As shown in \Cref{tab:success_time}, all empowered agents achieved higher success rates than the unempowered baseline. The safest agent ($c=0.5$) achieved a 99\% success rate, compared to 82\% for the baseline, with only a marginal increase in the time taken to clear the obstacle.

To further quantify safety, we measured the time agents spent near obstacles (\Cref{fig:dist-threshold}). \Cref{tab:time_decrease} shows that empowered agents consistently spent less time in close proximity to walls. For instance, the agent with $c=1$ spent over 72\% less time within 0.1m of an obstacle while only being 18.4\% slower, a significant safety improvement for a small trade-off in task completion time.

\begin{figure}[htb]
    \centering
    \includegraphics[width=0.8\columnwidth]{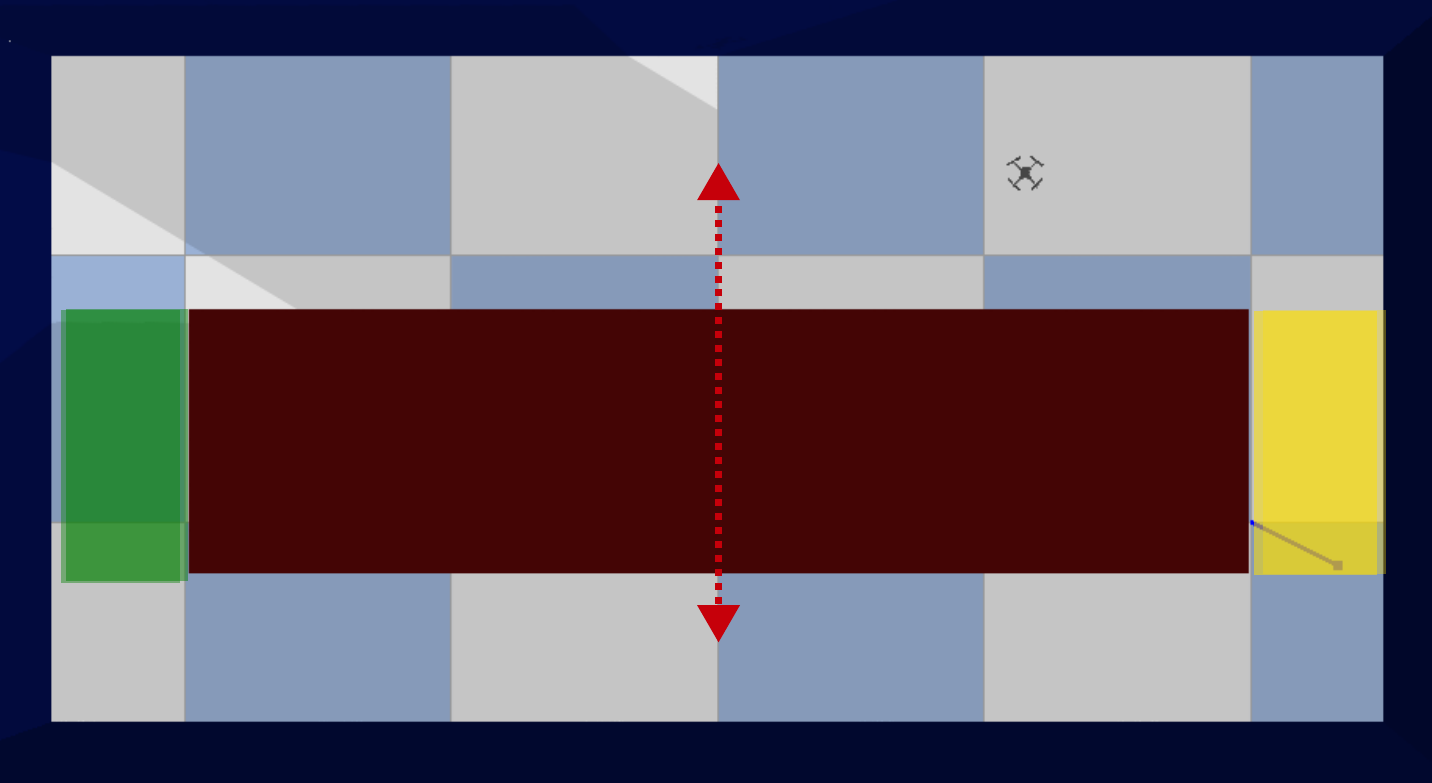}
    \caption{Top-down view of the PyBullet simulation environment. The agent starts in the yellow region and navigates to the green goal region. The maroon-colored obstacle's width and position are randomized in each episode, indicated by the red arrows, requiring adaptive navigation.}
  \label{fig:pybullet-map}
\end{figure}

\begin{figure}[htb]
    \centering
    \includegraphics[width=0.9\columnwidth]{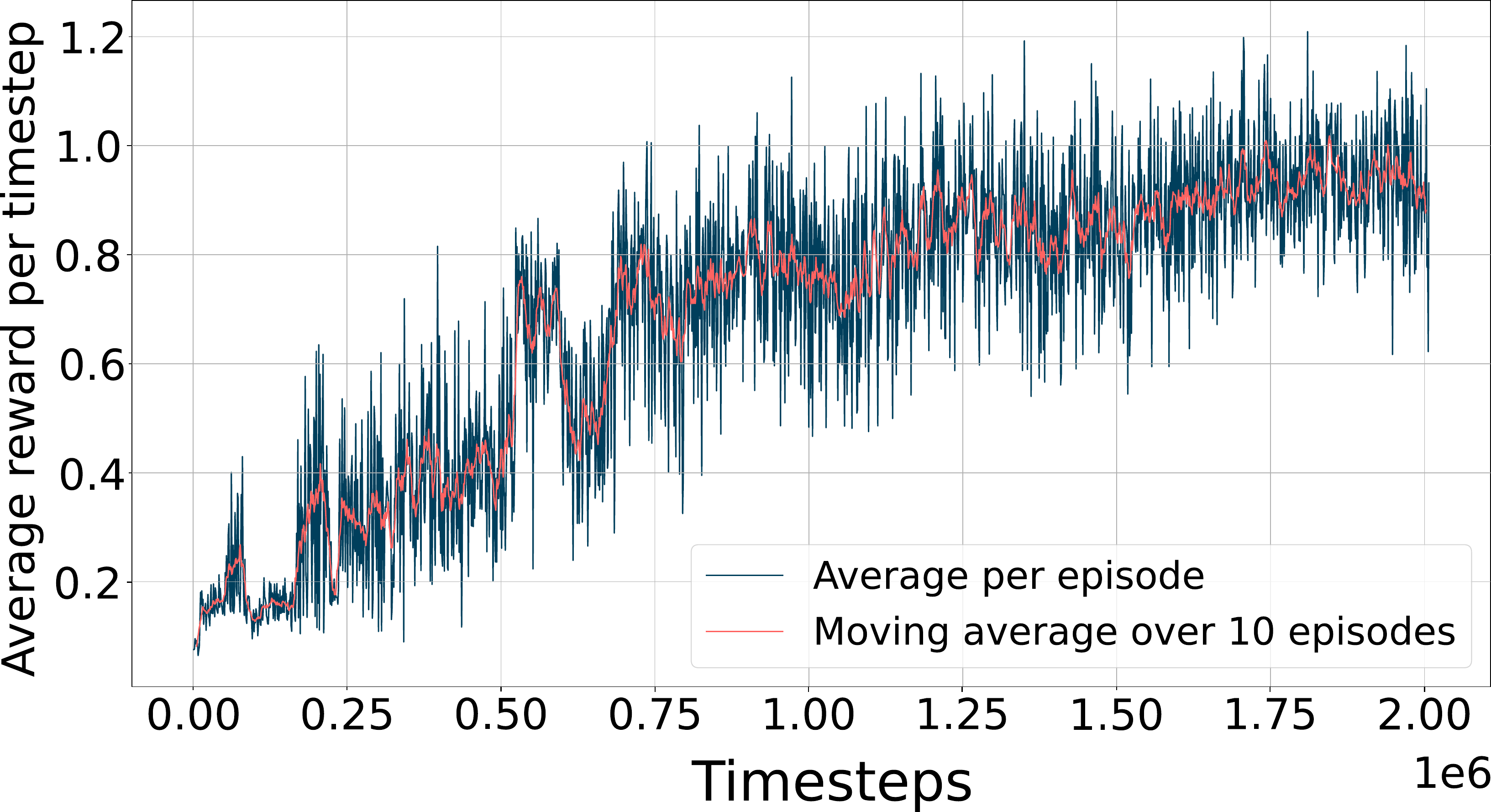}
    \captionsetup{belowskip=-0pt}\caption{Average reward per episode for the empowered agent ($c=1$). (Evaluated every $10^3$ training steps.)}
  \label{fig:avg-reward}
\end{figure}

\begin{table}[htb]
    \centering
    \renewcommand{\arraystretch}{1.2}
    \resizebox{\columnwidth}{!}{%
        \begin{tabular}{l | c | c}
            \toprule
             \multicolumn{1}{c}{\textbf{Agent}} & \multicolumn{1}{c}{\textbf{Success rate}} & \multicolumn{1}{c}{\textbf{Avg. clear time (s)}} \\
            \midrule
            \textbf{unempowered}      & 82\% & 6.56 \\
            \textbf{Empowered $c=10$}   & 89\% & 6.91 \\
            \textbf{Empowered $c=1$}    & 92\% & 7.77 \\
            \textbf{Empowered $c=0.5$}  & 99\% & 7.63 \\
            \bottomrule
        \end{tabular}%
    }
    \captionsetup{belowskip=-10pt}\caption{Success rate and time to clear the obstacle.}
    \label{tab:success_time}
\end{table}

\begin{figure}[htb]
    \centering
    \includegraphics[width=0.8\columnwidth]{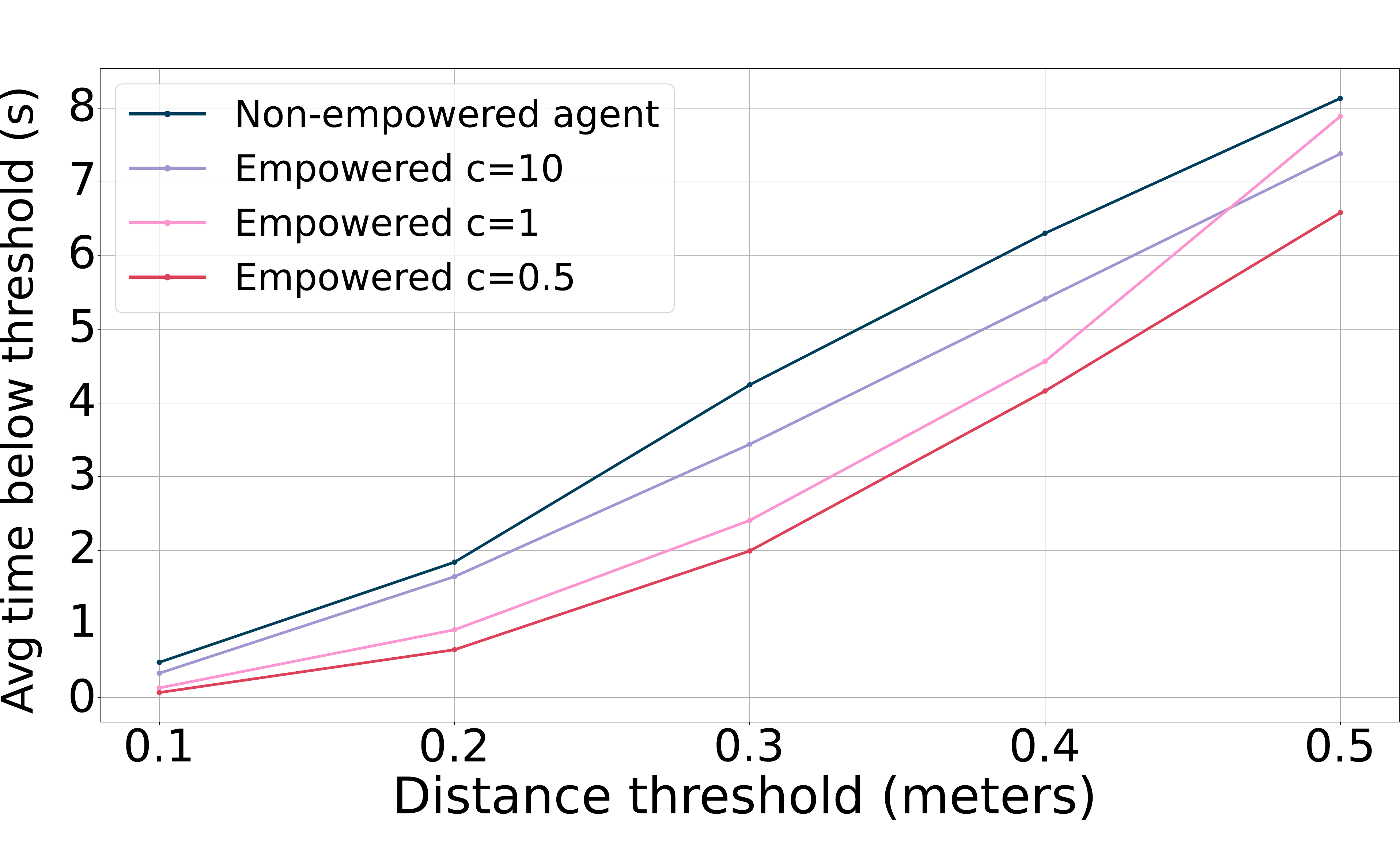}
    \captionsetup{belowskip=-0pt}\caption{Average time spent under different distance thresholds for empowered (blue lines) and unempowered (orange line) agents.}
  \label{fig:dist-threshold}
\end{figure}

\begin{table}[htb]
    \centering
    \renewcommand{\arraystretch}{1.2}
     \resizebox{\columnwidth}{!}{%
    \begin{tabular}{l | c c c c c}
        \toprule
        \textbf{Distance Threshold (m)} & \textbf{0.1} & \textbf{0.2} & \textbf{0.3} & \textbf{0.4} & \textbf{0.5} \\
        \midrule
        \textbf{Empowered $c=10$}  & 31.05\% & 10.70\% & 18.99\% & 14.14\% & 9.25\% \\
        \textbf{Empowered $c=1$}   & 72.61\% & 49.93\% & 43.34\% & 27.59\% & 3.01\% \\
        \textbf{Empowered $c=0.5$} & 85.76\% & 64.66\% & 53.10\% & 33.98\% & 19.07\% \\
        \bottomrule
    \end{tabular}}
    \captionsetup{belowskip=-10pt}\caption{Percentage decrease in time spent near obstacles for empowered agents relative to unempowered baseline.}
    \label{tab:time_decrease}
\end{table}

\section{Discussion}
\label{sec:discussion}
Established Safe RL methods \cite{gu2024review}, such as Control Barrier Functions (CBFs), typically ensure safety by defining a safe set in the state space and using a controller that guarantees the agent's state remains within this set, resulting in hard, explicit safety constraints. \emph{Again, we emphasize that our method can be used on top of established safe RL methods. In the future, we want to evaluate how this combination enhances various safe RL methods.}

Compared with standard safe RL methods that use additive reward penalties for unsafe behavior, C-STEP offers a philosophically distinct approach. Rather than imposing hard constraints to quantify safe states, it shapes the reward landscape to create an \textit{intrinsic motivation} for safety. The agent is not explicitly forbidden from entering certain states; instead, it is rewarded for maintaining a high degree of future maneuverability (i.e., a large collision-free reachable set). This encourages the agent to learn policies that \textit{emergently} exhibit safe behavior, such as maintaining distance from obstacles not because of a penalty function, but because doing so maximizes its future options. C-STEP can be viewed as a complementary tool that promotes a broader form of robustness, whereas CBFs provide formal safety guarantees with respect to a predefined safe set. A direct quantitative comparison against CBFs and other intrinsic rewards remains an avenue for future work to understand the trade-offs between these paradigms.

\subsection{On Model Free Learning}

Our method requires a simulator for evaluating finite-horizon trajectories. Notably, we do not require differentiable transition functions, as used by various safe RL methods and model-predictive controllers \cite{gu2024review}. In addition, our method does not require an explicit characterization of safe states; the simulator only needs to distinguish terminated and non-terminated trajectories, ands our method infers state safety from the volume sampled reachable set. In this way, our method may be considered to lie between pure model-free and model-based RL. 

\subsection{Beyond Deterministic Dynamics}

We formulated our method for a deterministic dynamical system. However, the implementation naturally extends to stochastic systems, since we only require a simulator to evaluate finite-horizon trajectories. Furthermore, most real-world robotic systems can be modeled as deterministic dynamical systems subject to environmental uncertainty and noise, and our theory should hold in approximation.

\section{Conclusion and Future Work}
\label{sec:concl}
This work introduced CST-Empowerment as an intrinsic reward for adapting navigation reward functions towards enhanced safety. We demonstrated that our physics-informed, empowered reward function efficiently enhances the safety of policies trained with RL for mobile robotic navigation.

We also acknowledge several limitations in our validation, which present directions for future research. First, a comprehensive evaluation would include direct quantitative comparisons against state-of-the-art baselines, statistical significance tests, and ablation studies. Furthermore, validating our approach on physical hardware would be essential to assess its robustness in dynamic, real-world environments.

Theoretical guarantees are also a natural next step. Other attractive future work includes training policies solely to maximize empowerment for use in critical situations and testing C-STEP in combination with other advanced reward designs. We believe empowerment is a powerful complementary tool that can augment any navigation reward to promote physics-informed intrinsic safety.

\section*{\uppercase{Acknowledgements}}

This work has received funding from the EU Horizon project SHEREC (HORIZON-CL4-2023-HUMAN-01-02).

\bibliographystyle{apalike}
{\small
\bibliography{references}}

\section*{\uppercase{Appendix}}

\subsection{Understanding Empowerment of Deterministic Agents}

To support our new definition of empowerment, consider a real-valued random variable $S' = g(S,A)$, which is a (measurable) function $g$ of real-valued random variables $S$ and $A$. Further, let $W_\varepsilon$  be a zero mean normal random variable with variance $\varepsilon^2>0$ independent of $(S,A)$. Then, define the perturbation $\Tilde{S}' \coloneqq g(S,A) + W_\varepsilon$. The conditional density of $\Tilde{S}'$ given $(S,A)$ is then given by $\cN(\Tilde{s}'; g(s,a), \varepsilon^2)  $. It follows that 
$h(\Tilde{S}' \mid S, A) = \ln \left( \varepsilon {\sqrt {2\,\pi \,e}}\right)$, the differential entropy of a normal random variable with arbitrary mean and variance $\varepsilon^2$. Therefore, the channel capacity is 
\begin{equation}
\label{eq:example_perturbation}
    \cC(A \to \Tilde{S}' \mid S) = \max\limits_{p(a)} h(g(S,A) \mid S) -   \ln \left( \varepsilon {\sqrt {2\,\pi \,e}}\right).
\end{equation}
Suppose that we use \eqref{eq:example_perturbation} %\cC(A \to \Tilde{S}' \mid S=s)$ as a function of $s$
to determine realizations of $S$ that maximize capacity. As $\ln \left( \varepsilon {\sqrt {2\,\pi \,e}}\right)$ is state independent follows that the maximizer of \eqref{eq:example_perturbation} is %equal to $\arg\max\limits_s \{ \max\limits_{p(a)} h(g(S,A) \mid S=s) \} $ for any $\varepsilon > 0$. 
independent of the perturbation, and the states that maximize empowerment are those that maximize the first term in \eqref{eq:example_perturbation}. Consequently, defining the empowerment of deterministic agents as the maximum state-dependent differential entropy is pertinent.

\subsection{Proof of \Cref{prop:1}}

By construction, the space of $X(T)$ given $x(0)=x_0$ is $\cR_T\left(x\right)$. As $f$ is Lipschitz continuous, the Picard–Lindelöf Theorem implies that every solution of Lipschitz continuous ODEs is unique. Hence, there exists a (non-trivial) distribution on $\cC([0,T];\cU)$, such that $X(T)$ is uniformly distributed on the reachable set. As the  differential entropy is maximized for uniform distributions, the statement follows.

\subsection{Simulation Details}

The inner area of the maps used in our evaluation set had dimensions of $5m \times 2.5m \times 2m$. The LiDAR sensor on the drone had $180$ evenly spaced beams covering $360^{\circ}$. The drone's start and goal z-axis positions were fixed at $1m$.
\paragraph{CTS-Empowerment Approximation}
 To model the drone's dynamics, we used its URDF file to incorporate relevant physical properties into our trajectory estimation.
\vspace{-5pt}
\begin{table}[htb]
    \centering    
    \begin{tabular}{l | l}
        \toprule
        Hyperparameters & Values \\
        \midrule
        $T$ & 1 second \\
        Trajectories  & 150 \\
        Points in trajectory & 100 \\        
        \bottomrule
    \end{tabular}
    \captionsetup{belowskip=-10pt}\caption{CST Empowerment hyperparameters used in experiments.}
   
     \label{tab:cst-hyper}
\end{table}

\paragraph{PPO Hyperparameters}
The network architecture consisted of two fully connected hidden layers of 2048 and 512 neurons, respectively.
\vspace{-5pt}
\begin{table}[htb]
    \centering   
    \begin{tabular}{l | l}
        \toprule
        Hyperparameters & Values \\
        \midrule
        Horizon & 2048 \\
        Batch size   & 64 \\
        Discount $(\gamma)$ & 0.99 \\
        GAE Parameter $(\lambda)$ & 0.95 \\
        Clip range & 0.2 \\
        Learning rate & $3 \times 10^{-5}$ \\
        SDE & True \\
        Num. epochs & 10 \\        
        \bottomrule
    \end{tabular}
   \captionsetup{belowskip=-0pt}\caption{PPO parameters for the PyBullet environment.}
    \label{tab:dronePPOhyper}
\end{table}

\end{document}